\documentclass[letterpaper, 10 pt, conference]{ieeeconf}

\IEEEoverridecommandlockouts

\usepackage{cite}
\usepackage{amsmath,amssymb,amsfonts}

\usepackage{amsthm}
\usepackage{float}
\newcommand{\field}[1]{\mathbb{#1}}
\newcommand{\R}{\field{R}}

\usepackage{graphicx}
\usepackage{microtype}

\usepackage{xcolor}
\usepackage{array}
\usepackage{booktabs}
\usepackage{hyperref}
\usepackage[nameinlink,capitalize,noabbrev]{cleveref}

\makeatletter
\g@addto@macro\normalsize{\abovedisplayskip=2pt plus 1pt minus 1pt \belowdisplayskip=2pt plus 1pt minus 1pt \abovedisplayshortskip=1pt \belowdisplayshortskip=1pt}
\g@addto@macro\small{\abovedisplayskip=2pt plus 1pt minus 1pt \belowdisplayskip=2pt plus 1pt minus 1pt \abovedisplayshortskip=1pt \belowdisplayshortskip=1pt}
\makeatother
\DeclareMathOperator*{\argmin}{arg\,min}

\author{Turki Bin Mohaya$^\star$ \quad Maitham F. AL-Sunni$^\star$ \quad John M. Dolan \quad Peter Seiler%
\thanks{$^\star$ These authors contributed equally to the work.}%
\thanks{T. Bin Mohaya and P. Seiler are with the Department of Electrical Engineering and Computer Science at the University of Michigan, Ann Arbor, MI, USA. Email: {\tt \{turki,pseiler\}@umich.edu}}%
\thanks{M. F. AL-Sunni is with the Department of Electrical \& Computer Engineering, Carnegie Mellon University, Pittsburgh, PA,
USA. Email: {\tt malsunni@andrew.cmu.edu}}%
\thanks{J. M. Dolan is with the Robotics Institute, Carnegie Mellon University, Pittsburgh, PA, USA. Email: {\tt jdolan@andrew.cmu.edu}}%
}

\newtheorem{assumption}{Assumption}
\newtheorem{theorem}{Theorem}
\newtheorem{lemma}{Lemma}
\newtheorem{corollary}{Corollary}

\begin{document}
\title{\LARGE \bf Generalizable Optimal Control with Transformers: \\ Closed-Loop Certification and Near-Optimality Guarantees}

\maketitle

\begin{abstract}
\looseness=-1 This letter develops closed-loop performance certificates for a transformer-based feedback policy. The policy is trained to imitate optimal Linear Quadratic Regulator (LQR) control across a family of heterogeneous Multiple-Input, Multiple-Output (MIMO) Linear Time-Invariant (LTI) systems. First, we establish a finite-sample excess-risk bound for the imitation loss minimized during training. Second, for each fixed problem instance, we derive regional closed-loop guarantees consisting of a forward-invariant operating region and a worst-case bound on deviation from the optimal rollout. Our main result is a probabilistic certificate for finite-horizon closed-loop near-optimality. Using an exact LQR cost identity, we express excess cost as a measurable per-rollout statistic and use independent calibration and validation rollouts to obtain a high-confidence bound on its violation probability. We evaluate the certificate on $28$ benchmark systems. This uses the base policy on seen systems and system-specific fine-tuned copies on unseen systems, with each rollout drawing the plant, cost, and initial condition from the corresponding certification distribution. All per-system certificates have violation probabilities below $3.1\%$, each at $95\%$ confidence; twenty systems certify suboptimality below $10\%$, with the tightest threshold equal to $4.8\times10^{-6}$.
\end{abstract}

\section{Introduction}
\label{sec:introduction}

\looseness=-1 Learned controllers are increasingly trained to imitate optimal control laws, yet they ship with training-loss numbers rather than closed-loop certificates. This gap has three sources. The imitation loss is measured on expert-generated trajectories, whereas the deployed policy generates its own state distribution. Moreover, a bounded training loss cannot directly upper-bound an unbounded quadratic cost. Finally, standard generalization bounds are relative to an unknown best-in-class risk and therefore do not provide an absolute closed-loop performance guarantee. For the Linear Quadratic Regulator (LQR), however, the optimal feedback and its cost are known exactly. This letter shows that the gap can be closed. A validation-measured excess statistic, grounded in an exact cost identity, delivers an explicit high-confidence bound on closed-loop LQR sub-optimality over a validation horizon. Regional guarantees of a certified operating region complement it.

The policy we certify is a single transformer trained to imitate optimal LQR state feedback across a family of heterogeneous MIMO LTI systems of differing state and input dimensions and cost objectives. It maps a window of recent states and the cost matrices $(Q,R)$ to a control action, using no plant matrices at inference. In our initial work~\cite{binmohaya2026transformers}, the policy is stabilizing in all $9{,}675$ evaluation rollouts across $28$ systems, with median relative sub-optimality $0.022\%$ on seen systems. The results in~\cite{binmohaya2026transformers} were empirical studies via simulations. Here we provide theory to support those empirical observations. The contributions below detail three specific performance guarantees.

\subsection{Related work}
\label{sec:related_work}
\looseness=-1 When the plant is uncertain, robust and gain-scheduled
synthesis certifies a controller against a prescribed uncertainty
set~\cite{zhou1996robust,shamma1988gain}, at the price of a new design per set. Learning-based treatments search for the LQR gain instead. Policy gradient provably reaches the optimal gain~\cite{fazel2018global}. The search has been extended to collections of plants through meta-learned initializations~\cite{toso2024meta}, federated updates~\cite{wang2023federated}, a common gain with sub-optimality and stability bounds~\cite{stamouli2025policy}, and imitation across related linear systems~\cite{zhang2023multi}. The guarantees in this line
attach to \emph{gains}: what is learned is a per-task matrix, a warm start,
or a single compromise gain whose optimality gap scales with task
heterogeneity, and the state dimension is fixed throughout. The object we
certify is different---one nonlinear sequence policy serving plants of
several dimensions and objectives.

A second line of work connects imitation error to closed-loop behavior. For incrementally stable experts, a small loss on expert-generated trajectories can bound the imitation loss along trajectories generated by the learned policy~\cite{pfrommer2022tasil,tu2022sample}. These single-system results apply to contracting experts and do not directly certify LQR cost. More closely related to our approach, \cite{hertneck2018learning} statistically validates a learned controller for approximate MPC using held-out rollouts, in the spirit of the scenario approach~\cite{campi2008exact}. Whereas that work certifies a probabilistic pointwise property, we certify closed-loop LQR performance over a finite validation horizon. An exact LQR identity yields a per-rollout excess-cost statistic, and independent validation rollouts provide a binomial bound on its threshold-violation probability. We complement this statistical certificate with Lyapunov-based regional guarantees for closed-loop containment.

On the architecture side, transformers can represent broad classes of sequence-to-sequence maps~\cite{yun2019are}. Theoretical guarantees are beginning to emerge for their use in estimation. For example, \cite{du2023can} shows that a transformer trained on random linear systems can recover the optimal filter in context and bounds its excess prediction risk. These results do not address closed-loop control. In contrast, we certify the cost of the closed loop generated by the learned policy, which a prediction-loss bound alone does not control.

\subsection{Contributions}
\looseness=-1 This letter builds on~\cite{binmohaya2026transformers}, which introduced and evaluated the transformer policy through simulations, but did not provide formal closed-loop guarantees. The present work extends~\cite{binmohaya2026transformers} with three theoretical contributions. First, we derive a finite-sample excess-risk bound for the masked Cauchy loss minimized during training (Theorem~\ref{thm:gen}). The result adapts the covering-number analysis of~\cite{du2023can} to cross-system control imitation. It bounds the imitation loss on expert-generated trajectories but does not by itself certify closed-loop behavior. Second, for each fixed problem instance, we derive regional closed-loop guarantees under stated operating-region assumptions. These consist of a forward-invariant region (Lemma~\ref{lem:roa}) and a deterministic worst-case bound on deviation from the optimal rollout (Theorem~\ref{thm:tube}). Third, and most importantly, we establish a probabilistic certificate for finite-horizon closed-loop near-optimality (Theorem~\ref{thm:nearopt}). Using independent calibration and validation rollouts, the certificate provides a high-confidence bound on the probability that the relative excess cost exceeds a calibrated threshold. It is computable from rollout data and the problem data $(A,B,Q,R)$. We evaluate this certificate on all $28$ benchmark systems introduced in~\cite{binmohaya2026transformers}.

\section{Problem Formulation}
\label{sec:problem_formulation}

Consider the LTI system
\small \begin{align}
    x_{t+1} = A x_{t} + B u_{t},
    \label{eq:system}
\end{align} \normalsize
with the fully measurable state $x_t \in \R^{n_x}$ and input $u_t \in \R^{n_u}$. In addition, define the infinite-horizon quadratic cost
\small \begin{align}
    \mathcal{J}(x_0;\pi) := \sum_{t=0}^{\infty} \bigl( x_t^\top Q x_t + u_t^\top R u_t \bigr),
    \label{eq:cost}
\end{align} \normalsize
where $Q\succ0$ and $R\succ0$ are diagonal with positive entries. The cost is evaluated under a causal policy $\pi$ generating $u_t$ from states observed up to time $t$. We further assume $(A,B)$ stabilizable and note that detectability of $(Q^{1/2},A)$ follows from $Q\succ0$. These are the standard assumptions for the discrete-time LQR problem~\cite{anderson1990optimal}. The solution is $u_t^\star=\pi^\star(x_t)=-K^\star x_t$ with
\small \begin{align}
    \label{eq:lqr_gain}
    K^\star = \bigl(R + B^\top P B\bigr)^{-1} B^\top P A.
\end{align} \normalsize
Here $P$ solves the Algebraic Riccati Equation (ARE)
\small \begin{align}
    P = Q + A^\top P A - A^\top P B \bigl(R + B^\top P B\bigr)^{-1} B^\top P A.
\label{ARE}
\end{align} \normalsize
The optimal cost is $\mathcal{J}(x_0;\pi^\star) = x_0^\top P x_0$.

We consider a family of $N$ such problems $\{A^{(i)},B^{(i)},Q^{(i)},R^{(i)}\}_{i=1}^N$ with possibly different state and input dimensions $(n_x^{(i)},n_u^{(i)})$. We assume each is well-posed with optimal gain $K^{\star,(i)}$ and controller $\pi^{\star,(i)}$. The superscript denotes a problem instance and is suppressed when a single instance is fixed.

\looseness=-1 The object of study is a \emph{single} learned policy $\pi_{\hat\phi}\bigl(\{x_l\}_{l=t-w}^{t},Q,R\bigr)$, parameterized by $\hat\phi\in\R^{n_\phi}$, which maps a window of $w{+}1$ recent states ($w\in\mathbb{N}$ the window length) and the instance's cost matrices to a control action, using no plant matrices at inference. The parameters $\hat\phi$ are obtained in our initial work~\cite{binmohaya2026transformers} by supervised imitation of optimal rollouts across the family, as a tractable surrogate for the ideal objective
\small \begin{align}
    \argmin_{\phi}\ \sum_{i=1}^N \sum_{j=1}^J \left[ \mathcal{J}^{(i)}\big(x_0^{(i,j)};\pi_\phi\big) - \mathcal{J}^{(i)}\big(x_0^{(i,j)};\pi^{\star,(i)}\big) \right],
    \label{eq:primary_objective}
\end{align} \normalsize
over $J$ initial conditions per instance. Here $\mathcal{J}^{(i)}$ is the LQR cost of instance $i$. Systems outside the training family receive fine-tuned copies of $\hat\phi$. Section~\ref{sec:approach} summarizes this pipeline.

\looseness=-1 Given the trained $\pi_{\hat\phi}$ and its fine-tuned copies, this letter asks three questions: \emph{(Q1) generalization}---does the training loss transfer from sampled rollouts to the population, with a finite-sample guarantee (Section~\ref{sec:theory}-A)? \emph{(Q2) closed-loop containment}---does the loop admit a forward-invariant operating region, and how far can it deviate from the optimal rollout (Section~\ref{sec:theory}-B)? \emph{(Q3) near-optimality}---can finitely many validation rollouts certify, with quantified confidence, that the closed-loop cost of $\pi_{\hat\phi}$, over a fixed horizon, stays close to the true optimal cost $\mathcal{J}(x_0;\pi^\star)$ (Sections~\ref{sec:theory}-C and \ref{sec:results})?

\section{The Learned Policy}
\label{sec:approach}
\looseness=-1 This section summarizes the elements of the framework in~\cite{binmohaya2026transformers} needed for the analysis: the shared data representation, the training loss, and the mapping from standardized policy outputs to physical controls. The transformer architecture enters the analysis only through the boundedness and parameter-Lipschitz properties in Lemma~\ref{lem:lip}. Architectural details are provided in~\cite{binmohaya2026transformers,vaswani2017attention}.

\subsection{Shared Representation}
\label{subsec:datacoll}\label{subsec:generalizable_optimal_control}
\looseness=-1 For each instance $i$, simulating the optimal feedback $\pi^{\star,(i)}(x_t)=-K^{\star,(i)}x_t$ from $J$ random initial conditions yields length-$T$ ($T\in\mathbb{N}$) state and control rollouts $\{x_t^{(i,j)}\}$ and $\{u_t^{(i,j)}\}$, $j=1,\ldots,J$; the instances span perturbed variants of mechanical, electrical, and robotic systems, each paired with several diagonal cost settings. Fixing an instance and dropping its superscript, the data take the following shared form.

\textbf{Standardization.} Let $\mu_x,\sigma_x\in\R^{n_x}$ and $\mu_u,\sigma_u\in\R^{n_u}$ be the entrywise means and standard deviations of the instance's states and inputs over all its rollouts ($\sigma_x,\sigma_u$ entrywise positive). Each sample is standardized entrywise, $\tilde{x}_t^{(j)}=(x_t^{(j)}-\mu_x)\oslash\sigma_x$ and $\tilde{u}_t^{(j)}=(u_t^{(j)}-\mu_u)\oslash\sigma_u$, where $\oslash$ denotes entrywise division.

\textbf{Padding and cost encoding.} Define $n_x^{\mathrm{max}}:=\max_i n_x^{(i)}$, $n_u^{\mathrm{max}}:=\max_i n_u^{(i)}$, and $\mathrm{Pad}(\cdot,\,n)$ as the operator appending trailing zeros up to dimension $n$. Next, define $\bar{x}_t^{(j)}=\mathrm{Pad}\big(\tilde{x}_t^{(j)},n_x^{\mathrm{max}}\big)$ and $\bar{u}_t^{(j)}=\mathrm{Pad}\big(\tilde{u}_t^{(j)},n_u^{\mathrm{max}}\big)$. The diagonal costs $Q=\mathrm{diag}(q)$, $R=\mathrm{diag}(r)$ are encoded by $\bar{q}:=\mathrm{Pad}(\log q,\,n_x^{\mathrm{max}})$ and $\bar{r}:=\mathrm{Pad}(\log r,\,n_u^{\mathrm{max}})$, with the logarithm entrywise. A sliding window of $w{+}1$ padded states is attached to the cost encoding to form the input
\small \begin{align}
S_t^{(j)} = \begin{bmatrix} (\bar{x}_{t-w}^{(j)})^\top \oplus {\bar{q}^\top} \oplus {\bar{r}^\top} \\ \vdots\\ (\bar{x}_{t}^{(j)})^\top \oplus {\bar{q}^\top} \oplus {\bar{r}^\top} \end{bmatrix} \label{eq:S}
\end{align} \normalsize
of fixed size $(w+1)\times d_{\mathrm{in}}$, $d_{\mathrm{in}}=2n_x^{\mathrm{max}}+n_u^{\mathrm{max}}$, where $\oplus$ denotes vector concatenation. The target is the current control $\bar{u}_t^{(j)}$.

\textbf{Masking.} The binary mask $\kappa:=\mathrm{Pad}(1_{n_u},\,n_u^{\mathrm{max}})$, where $1_n\in\R^n$ is the all-ones vector, marks the valid control entries. The dataset $\mathcal{D}$ collects the triplets $(S_t^{(i,j)},\bar{u}_t^{(i,j)},\kappa^{(i)})$ (instance superscripts restored) over all\footnote{For $t<w$ the missing past states are zero-padded to keep the length $w+1$.} $t\in\{0,\dots,T-1\}$, $j\in\{1,\dots,J\}$, and $i\in\{1,\dots,N\}$.

\subsection{Policy, Training, and Fine-Tuning}
\label{subsec:transformers}\label{subsec:training}\label{subsec:finetuning}
A transformer with parameters $\phi$ realizes the standardized policy $\bar{\pi}_\phi:\R^{(w+1)\times d_{\mathrm{in}}}\to\R^{n_u^{\mathrm{max}}}$: a finite composition of affine embeddings, multi-head self-attention blocks, layer normalizations, position-wise feed-forward maps, and a linear readout over parallel blocks that share the embedded input~\cite{binmohaya2026transformers,vaswani2017attention}. Training minimizes the \emph{masked Cauchy} loss over random mini-batches $\mathcal{B}$ of sample indices $(i,j,t)$ from $\mathcal{D}$,
\small \begin{align}
\mathcal{L}
= \frac{1}{|\mathcal{B}|} \sum_{(i,j,t) \in \mathcal{B}}
\ln \left( 1 + \left\| \frac{\kappa^{(i)}\odot(
\bar{\pi}_\phi(S_t^{(i,j)}) - \bar{u}_t^{(i,j)})}{\xi}\right\|_{2}^2 \right),
\nonumber
\end{align} \normalsize
\looseness=-1 with scale $\xi>0$ and $\odot$ the entrywise product, so that only valid control dimensions are penalized. At inference on an instance (superscript suppressed), the physical control is recovered by de-standardization of the first $n_u$ entries: $u_t=\sigma_u\odot\bar{\pi}_{\hat\phi}(S_t)(1{:}n_u)+\mu_u$. For systems outside the training family, a copy of the trained $\hat\phi$ (the \emph{base policy}) is fine-tuned on a few of their trajectories, one copy per system.

\section{Theoretical Analysis}
\label{sec:theory}
\looseness=-1 We give three guarantees: the excess-risk bound (Theorem~\ref{thm:gen}), the regional closed-loop properties (Lemma~\ref{lem:roa}, Theorem~\ref{thm:tube}), and, mainly, the near-optimality certificate over a validation horizon (Theorem~\ref{thm:nearopt}). The three results are complementary, and none relies on the conclusions of the others: the first concerns imitation loss on expert trajectories, the second gives deterministic regional guarantees for a fixed instance, and the third provides a statistical finite-horizon performance certificate over a specified rollout distribution.

\subsection{A generalization bound for the training loss}
\looseness=-1 We first bound the population training loss associated with the expert-rollout distribution. The unit of observation is an entire rollout: time-indexed samples within a rollout may be dependent, while rollouts are sampled independently. This result concerns training performance on expert-generated trajectories.

For instance $i\in\{1,\dots,N\}$, rollout $j\in\{1,\dots,J\}$, and time $t\in\{0,\dots,T-1\}$, let $S_t^{(i,j)}\in\R^{(w+1)\times d_{\mathrm{in}}}$, $\bar{u}_t^{(i,j)}\in\R^{n_u^{\mathrm{max}}}$, and $\kappa^{(i)}\in\{0,1\}^{n_u^{\mathrm{max}}}$ be the transformer input, the standardized padded LQR target, and the control mask of Section~\ref{subsec:generalizable_optimal_control}, and let $\bar{\pi}_\phi:\R^{(w+1)\times d_{\mathrm{in}}}\to\R^{n_u^{\mathrm{max}}}$, $\phi\in\Phi$, be the policy{, where $\Phi\subset\R^{n_\phi}$ is the parameter hypothesis class}. The per-sample loss is the summand of the training objective,
\small \begin{align}
{\mathcal{L}_t^{(i,j)}}(\phi) &:= \ln\!\Big(1+\tfrac{1}{\xi^2}\bigl\|\kappa^{(i)}\odot\bigl(\bar{\pi}_\phi(S_t^{(i,j)})-\bar{u}_t^{(i,j)}\bigr)\bigr\|_2^2\Big), \nonumber
\end{align} \normalsize
and the rollout-average, empirical, and population risks are $\bar{\mathcal{L}}^{(i,j)}(\phi) := \tfrac{1}{T}\!\sum_{t=0}^{T-1}\! {\mathcal{L}_t^{(i,j)}}(\phi)$,
$\widehat{\mathcal{L}}(\phi) := \tfrac{1}{NJ}\!\sum_{i,j}\!\bar{\mathcal{L}}^{(i,j)}(\phi)$, and $\mathcal{L}_{\mathrm{pop}}(\phi) := \tfrac{1}{N}\sum_{i=1}^{N}\mathbb{E}_{x_0\sim\nu_i}\bigl[\bar{\mathcal{L}}^{(i)}(\phi)\bigr]$, with $\xi>0$ the loss scale. Here, the expectation is over a new initial condition drawn from a distribution $\nu_i$ that is associated with instance $i$. Thus, $\mathcal{L}_{\mathrm{pop}}$ is the average expected imitation loss on new expert rollouts of the $N$ fixed benchmark instances. {We call the set of input--target pairs $(S_t^{(i,j)},\bar{u}_t^{(i,j)})$ realizable by such (zero-padded) optimal rollouts the \emph{rollout domain}.}

\begin{assumption}[Independent sampling]
\label{as:iid}
\looseness=-1 For each instance $i$, the initial conditions $x_0^{(i,1)},\ldots,x_0^{(i,J)}$ are drawn independently from $\nu_i$, and the resulting rollouts are independent across all pairs $(i,j)$. The rollout distributions may differ across instances. Time-indexed samples within a rollout need not be independent.
\end{assumption}
\begin{assumption}[Bounded data]
\label{as:bdd}
There is $B_S>0$ with $\|S_t^{(i,j)}\|_F\le B_S$ {($\|\cdot\|_F$ the Frobenius norm)} and $\|\bar{u}_t^{(i,j)}\|_2\le B_S$ for every rollout-domain sample.
\end{assumption}
\begin{assumption}[Bounded parameters]
\label{as:param}
There is $B_\Phi>0$ with $\|\phi\|_2\le B_\Phi$ for all $\phi\in\Phi$.
\end{assumption}

\begin{lemma}[Boundedness and parameter Lipschitzness]
\label{lem:lip}
Under Assumptions~\ref{as:bdd}--\ref{as:param}, there exist finite $B_\pi,{C_\Phi}>0$, depending only on $B_S$, $B_\Phi$, and the architecture, such that for every {input $S$ with $\|S\|_F\le B_S$} and all $\phi,\phi'\in\Phi$,
\small \begin{align}
\|\bar{\pi}_\phi(S)\|_2\le B_\pi,\quad
\|\bar{\pi}_\phi(S)-\bar{\pi}_{\phi'}(S)\|_2\le {C_\Phi}\|\phi-\phi'\|_2.
\nonumber
\end{align} \normalsize
\end{lemma}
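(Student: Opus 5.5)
The plan is an induction over the layers of the transformer, carrying two quantities jointly at each stage: a uniform bound on the activation and a Lipschitz constant with respect to the parameters. Write $\bar{\pi}_\phi = L_K\circ\cdots\circ L_1$, where each $L_k(\cdot;\phi_k)$ is one of the primitive blocks of Section~\ref{subsec:transformers}: an affine embedding, multi-head self-attention, a layer normalization, a position-wise feed-forward map (with a Lipschitz activation), the parallel-block combination, or the linear readout. Here $\phi_k$ is the sub-vector of $\phi$ used by block $k$, so $\|\phi_k\|_2\le B_\Phi$ by Assumption~\ref{as:param}.

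For each block I will establish three properties on bounded sets.
\begin{enumerate}
\item \emph{Output bound:} if $\|H\|_F\le h$ and $\|\phi_k\|\le B_\Phi$, then $\|L_k(H;\phi_k)\|_F\le b_k(h)$.
\item \emph{Lipschitz in the input:} $\|L_k(H;\phi_k)-L_k(H';\phi_k)\|_F\le \ell_k(h)\|H-H'\|_F$.
\item \emph{Lipschitz in the parameters:} $\|L_k(H;\phi_k)-L_k(H;\phi_k')\|_F\le p_k(h)\|\phi_k-\phi_k'\|_2$.
\end{enumerate}
For affine maps and feed-forward layers these follow from products of weight norms bounded by $B_\Phi$ together with the Lipschitz constant of the activation. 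For softmax attention, the attention weights are row-stochastic, so the output norm is at most $\|W_V\|\,\|H\|_F$. The softmax is $1$-Lipschitz, and its logits $HW_QW_K^\top H^\top$ are bounded and Lipschitz on bounded sets, with constants polynomial in $h$ and $B_\Phi$.

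Composition then uses the standard telescoping argument. Define $h_0=B_S$ and $h_k=b_k(h_{k-1})$, and set $B_\pi=h_K$. Writing $H_k,H_k'$ for the activations under $\phi$ and $\phi'$ respectively,
\[
\|H_k-H_k'\|\le \ell_k(h_{k-1})\|H_{k-1}-H_{k-1}'\|+p_k(h_{k-1})\|\phi_k-\phi_k'\|.
\]
Unrolling this recursion gives $C_\Phi=\sum_k p_k(h_{k-1})\prod_{m>k}\ell_m(h_{m-1})$, using $\|\phi_k-\phi_k'\|\le\|\phi-\phi'\|$. Parallel blocks that share the embedded input are handled by summing their contributions. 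The resulting constants depend only on $B_S$, $B_\Phi$, and the architecture: depth, widths, number of heads, $w$, and $d_{\mathrm{in}}$.

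The main obstacle is layer normalization, since $H\mapsto (H-\mathrm{mean})/\mathrm{std}$ is not Lipschitz near zero variance. I would first rely on the $\epsilon>0$ stabilizer used in practice, $\sqrt{\mathrm{var}+\epsilon}\ge\sqrt\epsilon$. With it, layer normalization is Lipschitz with constant $O(\epsilon^{-1/2}(1+h^2/\epsilon))$ on $\{\|H\|_F\le h\}$. Its output is bounded by $\sqrt{\text{width}}\cdot\|\gamma\|+\|\beta\|$ regardless of $h$, which in fact resets the activation bound. I will state this reliance on $\epsilon$ explicitly as part of the architectural dependence. A secondary care point is keeping the attention softmax's parameter-Lipschitz constant finite; this holds because the logits are bounded on bounded inputs and parameters.
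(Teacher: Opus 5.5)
Your proposal is correct and takes the same approach as the paper: the paper's proof also argues that every block is bounded and Lipschitz in both its input and its parameters on the bounded domain, with layer normalization handled by a fixed positive denominator (your $\epsilon$-stabilizer), and then composes these bounds as in \cite{du2023can}, which your telescoping recursion spells out. One harmless slip is that a row-stochastic attention matrix is not Frobenius-contractive, so your attention output bound (and likewise your layer-normalization output bound taken over the whole window) needs an extra factor $\sqrt{w+1}$; this leaves all constants finite and dependent only on $B_S$, $B_\Phi$, and the architecture.
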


\begin{proof}
\looseness=-1 On the domain of Assumptions~\ref{as:bdd}--\ref{as:param}, every block is bounded and Lipschitz in its input and its parameters (layer normalization has a fixed positive denominator, hence is smooth on bounded sets); composing these bounds gives $B_\pi$ and ${C_\Phi}$, as in~\cite{du2023can}.
\end{proof}

Lemma~\ref{lem:lip} induces the pseudometric $d_\Phi(\phi,\phi'):={C_\Phi}\|\phi-\phi'\|_2$, under which $\|\bar{\pi}_\phi(S)-\bar{\pi}_{\phi'}(S)\|_2\le d_\Phi(\phi,\phi')$ uniformly over rollout-domain inputs. Let $M(\Phi,d_\Phi,\varepsilon)$ denote the $\varepsilon$-covering number of $\Phi$ in $d_\Phi$, i.e., the size of the smallest $\varepsilon$-net~\cite{wainwright2019high}. For $\zeta\in(0,1)$ and $\varepsilon>0$ define
\small \begin{align}
\Lambda := \ln\!\Big(1+\tfrac{(B_\pi+B_S)^2}{\xi^2}\Big), \
{g_{\zeta,\varepsilon}} := \tfrac{2\varepsilon}{\xi}+\Lambda\sqrt{\tfrac{\ln\!\left(2M(\Phi,d_\Phi,\varepsilon)/\zeta\right)}{2NJ}}.
\nonumber
\end{align} \normalsize

\begin{theorem}[Excess-risk bound for the training loss]
\label{thm:gen}
Suppose Assumptions~\ref{as:iid}--\ref{as:param} hold and let $\hat{\phi}\in\argmin_{\phi\in\Phi}\widehat{\mathcal{L}}(\phi)$. Then for every $\zeta\in(0,1)$ and $\varepsilon>0$, with probability at least $1-\zeta$ over the sampled training rollouts,
\small \begin{align}
\mathcal{L}_{\mathrm{pop}}(\hat{\phi})-\inf_{\phi\in\Phi}\mathcal{L}_{\mathrm{pop}}(\phi)\ \le\ 2{g_{\zeta,\varepsilon}}.
\nonumber
\end{align} \normalsize
\end{theorem}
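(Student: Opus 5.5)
The proof follows the standard ERM decomposition. Let $\phi^\dagger\in\Phi$ be any parameter (or a near-minimizer of $\mathcal{L}_{\mathrm{pop}}$, with an arbitrary slack $\eta>0$ sent to zero at the end). We write
\begin{align}
\mathcal{L}_{\mathrm{pop}}(\hat\phi)-\mathcal{L}_{\mathrm{pop}}(\phi^\dagger)
&=\bigl[\mathcal{L}_{\mathrm{pop}}(\hat\phi)-\widehat{\mathcal{L}}(\hat\phi)\bigr]+\bigl[\widehat{\mathcal{L}}(\hat\phi)-\widehat{\mathcal{L}}(\phi^\dagger)\bigr] \nonumber\\
&\quad+\bigl[\widehat{\mathcal{L}}(\phi^\dagger)-\mathcal{L}_{\mathrm{pop}}(\phi^\dagger)\bigr]. \nonumber
\end{align}
The middle term is nonpositive because $\hat\phi$ minimizes $\widehat{\mathcal{L}}$. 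Each outer term is at most $\sup_{\phi\in\Phi}|\widehat{\mathcal{L}}(\phi)-\mathcal{L}_{\mathrm{pop}}(\phi)|$. It therefore suffices to show that this uniform deviation is at most $g_{\zeta,\varepsilon}$ with probability at least $1-\zeta$, which produces the factor $2$.

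\textbf{Step 1: boundedness and Lipschitzness of the loss.} By Lemma~\ref{lem:lip} and Assumption~\ref{as:bdd}, every rollout-domain sample satisfies $\|\kappa\odot(\bar\pi_\phi(S)-\bar u)\|_2\le B_\pi+B_S$. Hence $\mathcal{L}_t^{(i,j)}(\phi)\in[0,\Lambda]$, and the rollout averages $\bar{\mathcal{L}}^{(i,j)}(\phi)$ also lie in $[0,\Lambda]$. For Lipschitzness in $\phi$, consider $h(z)=\ln(1+\|z\|_2^2/\xi^2)$. Its gradient is $2z/(\xi^2+\|z\|^2)$, whose norm is maximized at $\|z\|=\xi$ with value $1/\xi$. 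So $h$ is $(1/\xi)$-Lipschitz, and since masking is nonexpansive,
\[
|\mathcal{L}_t^{(i,j)}(\phi)-\mathcal{L}_t^{(i,j)}(\phi')|\le \tfrac{1}{\xi}\,d_\Phi(\phi,\phi').
\]
This bound carries over to the rollout averages, to $\widehat{\mathcal{L}}$, and to $\mathcal{L}_{\mathrm{pop}}$.

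\textbf{Step 2: concentration on a finite net.} Fix a minimal $\varepsilon$-net $\{\phi_1,\dots,\phi_M\}$ of $\Phi$ in $d_\Phi$, with $M=M(\Phi,d_\Phi,\varepsilon)$. For a fixed $\phi_k$, the quantity $\widehat{\mathcal{L}}(\phi_k)$ is an average of $NJ$ independent variables $\bar{\mathcal{L}}^{(i,j)}(\phi_k)\in[0,\Lambda]$ by Assumption~\ref{as:iid}. The expectation of this average equals $\frac{1}{N}\sum_i\mathbb{E}_{\nu_i}[\bar{\mathcal{L}}^{(i)}(\phi_k)]=\mathcal{L}_{\mathrm{pop}}(\phi_k)$, since each instance has exactly $J$ rollouts. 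Non-identical distributions across instances are fine for Hoeffding. Treating each rollout as the independent unit sidesteps the within-rollout dependence. Hoeffding's inequality gives a two-sided tail of $2\exp(-2NJs^2/\Lambda^2)$. A union bound over the $M$ net points, with $s=\Lambda\sqrt{\ln(2M/\zeta)/(2NJ)}$, yields $\max_k|\widehat{\mathcal{L}}(\phi_k)-\mathcal{L}_{\mathrm{pop}}(\phi_k)|\le s$ with probability at least $1-\zeta$.

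\textbf{Step 3: extension to all of $\Phi$.} For any $\phi\in\Phi$, pick $\phi_k$ with $d_\Phi(\phi,\phi_k)\le\varepsilon$. By Step 1, both $\widehat{\mathcal{L}}$ and $\mathcal{L}_{\mathrm{pop}}$ move by at most $\varepsilon/\xi$ when passing from $\phi$ to $\phi_k$. The uniform deviation is therefore at most $2\varepsilon/\xi+s=g_{\zeta,\varepsilon}$. Combining this with the decomposition gives the bound $2g_{\zeta,\varepsilon}$ for every $\phi^\dagger$, hence for the infimum.

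\textbf{Main obstacle.} The delicate point is the Lipschitz step. Lemma~\ref{lem:lip} only controls inputs with $\|S\|_F\le B_S$, so we must make sure the population expectation uses rollouts lying in the rollout domain almost surely. This holds because Assumption~\ref{as:bdd} is stated for every rollout-domain sample, covering both fresh and training rollouts. Measurability of the supremum needs no separate treatment, since we only ever bound through the finite net.
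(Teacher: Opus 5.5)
Your proposal is correct and follows the paper's proof sketch step for step. The shared steps are: each rollout loss lies in $[0,\Lambda]$; Hoeffding over the $NJ$ independent rollouts, with a union bound over an $\varepsilon$-net of $(\Phi,d_\Phi)$; the $(1/\xi)$-Lipschitzness of $z\mapsto\ln(1+\|z\|_2^2/\xi^2)$ absorbs the net approximation on both $\widehat{\mathcal{L}}$ and $\mathcal{L}_{\mathrm{pop}}$, giving the $2\varepsilon/\xi$ term; and the ERM decomposition doubles the uniform deviation, with your write-up also making explicit the gradient bound and the identity $\mathbb{E}[\widehat{\mathcal{L}}(\phi_k)]=\mathcal{L}_{\mathrm{pop}}(\phi_k)$ that the sketch leaves implicit.
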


\begin{proof}[Proof sketch]
\looseness=-1 Each rollout loss lies in $[0,\Lambda]$ by Lemma~\ref{lem:lip} and Assumption~\ref{as:bdd}, so Hoeffding's inequality~\cite{hoeffding1963probability} with a union bound over an $\varepsilon$-net of $(\Phi,d_\Phi)$ bounds the uniform deviation by ${g_{\zeta,\varepsilon}}$; the $(1/\xi)$-Lipschitzness of the loss covers the net approximation~\cite{wainwright2019high}, and the risk decomposition at the empirical minimizer doubles the bound.
\end{proof}

Theorem~\ref{thm:gen} instantiates, for our masked control loss, the covering-number excess-risk analysis used for transformer filters in~\cite{du2023can}. \looseness=-1 The probability is taken over the sampled expert rollouts, with the $N$ problem instances held fixed. The theorem therefore controls excess risk on new rollouts of these instances. It implies neither closed-loop stability nor near-optimality, because in closed loop the states are generated by $\pi_{\hat\phi}$ rather than $\pi^\star$.

\subsection{Regional closed-loop properties}
\looseness=-1 We next derive deterministic regional guarantees for a fixed problem instance. Fix a single problem instance and suppress its index. $A,B,Q,R,K^\star$ {are as in Section~\ref{sec:problem_formulation}, and the control statistics $\mu_u,\sigma_u$ as in Sections~\ref{subsec:generalizable_optimal_control} and \ref{subsec:training}}. The optimal closed loop $A_{\mathrm{cl}}:=A-BK^\star$ is Schur, so there is a Lyapunov matrix $P_0\succ0$ solving $A_{\mathrm{cl}}^\top P_0A_{\mathrm{cl}}-P_0=-I$. Define $V_0(x):=x^\top P_0x$. With applied control {$u_t=\sigma_u\odot\bar{\pi}_{\hat\phi}(S_t)(1{:}n_u)+\mu_u$} and control deviation $e_t:=u_t+K^\star x_t$ from the optimal LQR action, de-standardization against the target {$\bar{u}_t=\mathrm{Pad}\bigl((-K^\star x_t-\mu_u)\oslash\sigma_u,\,n_u^{\mathrm{max}}\bigr)$} gives the {exact entrywise relation
$e_t=\sigma_u\odot\tilde{e}_t(1{:}n_u)$, $\tilde{e}_t:=\kappa\odot\bigl(\bar{\pi}_{\hat\phi}(S_t)-\bar{u}_t\bigr)$, whence $\|e_t\|_2\le\|\sigma_u\|_\infty\,\|\tilde{e}_t\|_2$, with $\|\sigma_u\|_\infty$ the largest entry of $\sigma_u$}.
By Lemma~\ref{lem:lip} and Assumption~\ref{as:bdd}, $\|\tilde{e}_t\|_2\le B_\pi+B_S$ whenever {the window's standardized input and target satisfy the bounds of Assumption~\ref{as:bdd}}, and then $\|e_t\|_2\le {c_e}:= {\|\sigma_u\|_\infty}(B_\pi+B_S)$. The bound is regional: the standardized target grows with $\|x_t\|_2$, so no constant bounds $\|e_t\|_2$ globally.

\begin{assumption}[Operating region]
\label{as:cl}
\looseness=-1 There is a radius $r_0>0$ such that every state window whose states all lie in $\mathcal X_0:=\{x:\|x\|_2\le r_0\}$ produces a standardized input and target obeying Assumption~\ref{as:bdd}, so that $\|e_t\|_2\le {c_e}={\|\sigma_u\|_\infty}(B_\pi+B_S)$ {for every such window} (for $t<w$ the missing slots are zero-padded, Section~\ref{subsec:generalizable_optimal_control}, and satisfy the bounds trivially). Outside $\mathcal X_0$, no uniform error bound is assumed.
\end{assumption}

\begin{lemma}[Forward-invariant operating region]
\label{lem:roa}
{Let Assumption~\ref{as:cl} hold. The learned closed loop obeys the perturbed dynamics $x_{t+1}=A_{\mathrm{cl}}x_t+Be_t$, with $\|e_t\|_2\le c_e$ whenever the current state window lies entirely in $\mathcal X_0$.} Let $\varrho:=\bigl(1-\tfrac{1}{\lambda_{\max}(P_0)}\bigr)^{1/2}\in[0,1)$ {(well defined: the Lyapunov equation gives $P_0\succeq I$, hence $\lambda_{\max}(P_0)\ge1$)} and $a_2:=\|B^\top P_0B\|_2$. While {the window remains in $\mathcal X_0$}, $V_0$ obeys the input-to-state-stability inequality~\cite{jiang2001input}, $V_0(x_{t+1})\le\varrho\,V_0(x_t)+\tfrac{a_2{c_e}^2}{1-\varrho}$,
with steady-state level $\eta_{\mathrm{ss}}:=a_2{c_e}^2/(1-\varrho)^2$. Whenever $\eta_{\mathrm{ss}}\le\lambda_{\min}(P_0)r_0^2$, any level $\eta$ with $\eta_{\mathrm{ss}}\le\eta\le\lambda_{\min}(P_0)\,r_0^2$ yields a sublevel set $\mathcal O_\eta:=\{x:V_0(x)\le\eta\}\subseteq\mathcal X_0$ that is invariant along trajectories started in it: every learned closed-loop trajectory with $x_0\in\mathcal O_\eta$ (and zero-padded initial window) satisfies $x_t\in\mathcal O_\eta$ for all $t\ge0$, as does the optimal rollout from the same $x_0$.
\end{lemma}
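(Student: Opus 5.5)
The plan is a one-step Lyapunov estimate for the perturbed dynamics, followed by a strong induction over time to obtain invariance. First, the dynamics identity is immediate. Substituting $u_t=-K^\star x_t+e_t$, which is the definition of $e_t$, into \eqref{eq:system} gives $x_{t+1}=A_{\mathrm{cl}}x_t+Be_t$. Whenever the current window lies in $\mathcal X_0$, Assumption~\ref{as:cl} supplies $\|e_t\|_2\le c_e$.

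Next I expand $V_0(x_{t+1})=x_t^\top A_{\mathrm{cl}}^\top P_0A_{\mathrm{cl}}x_t+2x_t^\top A_{\mathrm{cl}}^\top P_0Be_t+e_t^\top B^\top P_0Be_t$ and bound the three terms separately.
\begin{itemize}
\item By the Lyapunov equation, the first term equals $V_0(x_t)-\|x_t\|_2^2$. Since $V_0(x)\le\lambda_{\max}(P_0)\|x\|_2^2$, it is at most $(1-1/\lambda_{\max}(P_0))V_0(x_t)=\varrho^2V_0(x_t)$. The same Lyapunov equation gives $P_0=I+A_{\mathrm{cl}}^\top P_0A_{\mathrm{cl}}\succeq I$, which justifies the well-definedness remark.
\item The last term is at most $a_2c_e^2$.
\item For the cross term I apply Cauchy--Schwarz in the $P_0$-inner product. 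This bounds it by $2\|P_0^{1/2}A_{\mathrm{cl}}x_t\|_2\,\|P_0^{1/2}Be_t\|_2\le 2\varrho\sqrt{V_0(x_t)}\sqrt{a_2}\,c_e$.
\end{itemize}
Collecting the three bounds yields the compact form
\begin{align}
\sqrt{V_0(x_{t+1})}\le \varrho\sqrt{V_0(x_t)}+\sqrt{a_2}\,c_e .\nonumber
\end{align}
To recover the stated ISS inequality, I apply Young's inequality $(a+b)^2\le(1+\epsilon)a^2+(1+1/\epsilon)b^2$ with $1+\epsilon=1/\varrho$, which gives $1+1/\epsilon=1/(1-\varrho)$. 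This turns $\varrho^2V_0$ into $\varrho V_0$ and $a_2c_e^2$ into $a_2c_e^2/(1-\varrho)$. The degenerate case $\varrho=0$ is handled directly from the square-root form.

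For invariance, suppose $V_0(x_t)\le\eta$. The square-root form gives $\sqrt{V_0(x_{t+1})}\le\varrho\sqrt\eta+\sqrt{a_2}c_e$. This is at most $\sqrt\eta$ exactly when $\sqrt{a_2}c_e\le(1-\varrho)\sqrt\eta$, i.e.\ when $\eta\ge\eta_{\mathrm{ss}}$. Equivalently, from the ISS form, $\varrho\eta+(1-\varrho)\eta_{\mathrm{ss}}\le\eta$. Containment $\mathcal O_\eta\subseteq\mathcal X_0$ follows because $\lambda_{\min}(P_0)\|x\|_2^2\le V_0(x)\le\eta\le\lambda_{\min}(P_0)r_0^2$ implies $\|x\|_2\le r_0$.

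The step I expect to need the most care is that the error bound requires the \emph{whole window} $x_{t-w},\dots,x_t$ to lie in $\mathcal X_0$, not just $x_t$. I therefore run a strong induction on the hypothesis $H_t$: $x_s\in\mathcal O_\eta$ for all $0\le s\le t$.
\begin{itemize}
\item The base case $H_0$ holds by assumption on $x_0$. The initial window consists of $x_0$ together with zero-padded slots, and Assumption~\ref{as:cl} declares those slots admissible.
\item For the step, $H_t$ and $\mathcal O_\eta\subseteq\mathcal X_0$ place every genuine state of the window in $\mathcal X_0$. Hence $\|e_t\|_2\le c_e$, the one-step bound applies, and we get $x_{t+1}\in\mathcal O_\eta$, which is $H_{t+1}$.
\end{itemize}
Finally, the optimal rollout is the special case $e_t\equiv0$. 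There $V_0(x_{t+1})=V_0(x_t)-\|x_t\|_2^2\le V_0(x_t)$, so it also stays in $\mathcal O_\eta$ for all $t\ge0$.
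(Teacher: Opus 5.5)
Your proposal is correct and follows essentially the same route as the paper. It uses the $P_0$-norm contraction $\|A_{\mathrm{cl}}z\|_{P_0}\le\varrho\|z\|_{P_0}$ from the Lyapunov equation, the one-step bound $\sqrt{V_0(x_{t+1})}\le\varrho\sqrt{V_0(x_t)}+\sqrt{a_2}\,c_e$, Young's inequality for the ISS form, and induction over the window with zero-padded initial slots; you expand the quadratic and apply Cauchy--Schwarz to the cross term where the paper applies the triangle inequality directly, which is an equivalent computation, and your explicit choice $1+\epsilon=1/\varrho$ and direct use of the square-root form for the threshold $\eta\ge\eta_{\mathrm{ss}}$ merely spell out what the paper states tersely.
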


\begin{proof}
Write $\|z\|_{P_0}:=V_0(z)^{1/2}$ for $z\in\R^{n_x}$. For $x\in\mathcal O_\eta$, $V_0(x)\ge\lambda_{\min}(P_0)\|x\|_2^2$ and $\eta\le\lambda_{\min}(P_0)r_0^2$ give $\|x\|_2\le r_0$, so $\mathcal O_\eta\subseteq\mathcal X_0$, and $\|e_t\|_2\le {c_e}$ whenever the whole time-$t$ window lies in $\mathcal O_\eta$. The Lyapunov equation gives $V_0(A_{\mathrm{cl}}z)=V_0(z)-\|z\|_2^2\le\varrho^2V_0(z)$ (using $\|z\|_2^2\ge V_0(z)/\lambda_{\max}(P_0)$), i.e.\ $\|A_{\mathrm{cl}}z\|_{P_0}\le\varrho\|z\|_{P_0}$; and $\|Be_t\|_{P_0}^2\le a_2\|e_t\|_2^2\le a_2{c_e}^2$. The triangle inequality on $x_{t+1}=A_{\mathrm{cl}}x_t+Be_t$ gives $\|x_{t+1}\|_{P_0}\le\varrho\|x_t\|_{P_0}+\sqrt{a_2}{c_e}$; squaring and applying Young's inequality yields the ISS inequality (immediate when $\varrho=0$), whose scalar comparison recursion $V\mapsto\varrho V+a_2{c_e}^2/(1-\varrho)$ has fixed point $\eta_{\mathrm{ss}}$. Hence if $\eta\ge\eta_{\mathrm{ss}}$ and the whole time-$t$ window lies in $\mathcal O_\eta$, then $V_0(x_{t+1})\le\varrho\eta+(1-\varrho)\eta_{\mathrm{ss}}\le\eta$; induction from $t=0$, where the zero-padded slots satisfy the bounds trivially, keeps every state, hence every window, in $\mathcal O_\eta\subseteq\mathcal X_0$, proving the claim for the learned loop. For the optimal rollout $x_t^\star$ ($x_{t+1}^\star=A_{\mathrm{cl}}x_t^\star$, i.e., $e_t\equiv0$), $V_0(x_{t+1}^\star)\le\varrho^2V_0(x_t^\star)$ and $\mathcal O_\eta$ is also invariant.
\end{proof}

\begin{theorem}[Worst-case deviation from the optimal rollout]
\label{thm:tube}
{Let Assumption~\ref{as:cl} hold, suppose $\eta_{\mathrm{ss}}\le\lambda_{\min}(P_0)r_0^2$, and fix a level $\eta$ with $\eta_{\mathrm{ss}}\le\eta\le\lambda_{\min}(P_0)r_0^2$ and an $x_0\in\mathcal O_\eta$.} By Lemma~\ref{lem:roa}, the learned trajectory $x_t$ and the optimal rollout $x_t^\star$ ($x_{t+1}^\star=A_{\mathrm{cl}}x_t^\star$, $x_0^\star=x_0$) both remain in $\mathcal O_\eta$, so $\|e_t\|_2\le {c_e},\forall t$. With $a_1:=\|A_{\mathrm{cl}}^\top P_0B\|_2$,
\small \begin{align}
\|x_t-x_t^\star\|_2\ \le\ \chi:={c_e}\sqrt{\tfrac{2\,\lambda_{\max}(P_0)\,(2a_1^2+a_2)}{\lambda_{\min}(P_0)}},\quad \forall t.
\nonumber
\end{align} \normalsize
\end{theorem}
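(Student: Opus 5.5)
The plan is to analyze the deviation $\delta_t:=x_t-x_t^\star$ through the Lyapunov function $V_0$. Subtracting the optimal recursion $x_{t+1}^\star=A_{\mathrm{cl}}x_t^\star$ from the perturbed dynamics $x_{t+1}=A_{\mathrm{cl}}x_t+Be_t$ of Lemma~\ref{lem:roa} gives
\[
\delta_{t+1}=A_{\mathrm{cl}}\delta_t+Be_t,\qquad \delta_0=0 .
\]
By Lemma~\ref{lem:roa}, both trajectories stay in $\mathcal O_\eta\subseteq\mathcal X_0$. Hence every window stays in $\mathcal X_0$ and $\|e_t\|_2\le c_e$ for all $t$, using the zero padding at $t<w$. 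The problem therefore reduces to a linear system driven by a bounded disturbance with zero initial condition. It remains to bound $V_0(\delta_t)$ uniformly in $t$ and then convert with $\|\delta_t\|_2^2\le V_0(\delta_t)/\lambda_{\min}(P_0)$.

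Next I expand $V_0(\delta_{t+1})$ exactly rather than using the triangle inequality, so that the constant $a_1$ appears:
\[
V_0(\delta_{t+1})=\delta_t^\top A_{\mathrm{cl}}^\top P_0A_{\mathrm{cl}}\delta_t+2\delta_t^\top A_{\mathrm{cl}}^\top P_0Be_t+e_t^\top B^\top P_0Be_t .
\]
The Lyapunov equation turns the first term into $V_0(\delta_t)-\|\delta_t\|_2^2$. The cross term is at most $2a_1\|\delta_t\|_2c_e$, and the last term is at most $a_2c_e^2$.

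Young's inequality in the form $2a_1\|\delta_t\|_2c_e\le\tfrac12\|\delta_t\|_2^2+2a_1^2c_e^2$ then gives
\[
V_0(\delta_{t+1})\le V_0(\delta_t)-\tfrac12\|\delta_t\|_2^2+(2a_1^2+a_2)c_e^2 .
\]
Using $\|\delta_t\|_2^2\ge V_0(\delta_t)/\lambda_{\max}(P_0)$, this becomes the contraction
\[
V_0(\delta_{t+1})\le\Bigl(1-\tfrac{1}{2\lambda_{\max}(P_0)}\Bigr)V_0(\delta_t)+(2a_1^2+a_2)c_e^2 .
\]
The contraction factor lies in $[\tfrac12,1)$ because $\lambda_{\max}(P_0)\ge1$.

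Starting from $V_0(\delta_0)=0$, induction on the scalar comparison recursion shows that $V_0(\delta_t)$ never exceeds its fixed point $2\lambda_{\max}(P_0)(2a_1^2+a_2)c_e^2$. Dividing by $\lambda_{\min}(P_0)$ and taking square roots yields exactly $\chi$.

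The main obstacle is matching the stated constant. The Young weighting $\tfrac12$ must be chosen so that the fixed point produces precisely the factors $2\lambda_{\max}(P_0)$ and $2a_1^2+a_2$. I would first check that the weighting $\tfrac12$ reproduces the displayed $\chi$, and adjust the weight if it does not. A second point to verify carefully is that the uniform bound $\|e_t\|_2\le c_e$ holds at every $t$, including the zero-padded startup windows; this is supplied by the invariance in Lemma~\ref{lem:roa}.
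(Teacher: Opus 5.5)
Your proposal is correct and follows the paper's proof essentially step for step. Both arguments use the zero-initialized deviation recursion, the exact one-step expansion with $A_{\mathrm{cl}}^\top P_0A_{\mathrm{cl}}=P_0-I$, and the half-split of $-\|\delta_t\|_2^2$; your Young step with weight $\tfrac12$ is the paper's maximization of $-\tfrac12 s^2+2a_1c_es+a_2c_e^2$, and both conclude with the fixed-point bound $2\lambda_{\max}(P_0)(2a_1^2+a_2)c_e^2$. That weight already reproduces $\chi$ exactly, so no adjustment is needed.
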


\begin{proof}
The deviation $z_t:=x_t-x_t^\star$ obeys $z_{t+1}=A_{\mathrm{cl}}z_t+Be_t$, $z_0=0$. With $\Omega_t:=z_t^\top P_0z_t$ and $A_{\mathrm{cl}}^\top P_0A_{\mathrm{cl}}=P_0-I$, expanding one step and bounding the cross and quadratic terms by Cauchy--Schwarz with $\|e_t\|_2\le {c_e}$ gives $\Omega_{t+1}-\Omega_t\le-\|z_t\|_2^2+2a_1{c_e}\|z_t\|_2+a_2{c_e}^2$. Split $-\|z_t\|_2^2$ in half: one half gives $-\Omega_t/(2\lambda_{\max}(P_0))$ via $\Omega_t\le\lambda_{\max}(P_0)\|z_t\|_2^2$, and the other bounds the forcing, $\max_{{s}\ge0}(-\tfrac12 {s}^2+2a_1{c_e s}+a_2{c_e}^2)=(2a_1^2+a_2){c_e}^2$. Hence $\Omega_{t+1}\le(1-{\gamma_0})\Omega_t+c_z$ with ${\gamma_0}:=1/(2\lambda_{\max}(P_0))\in(0,\tfrac12]$ and $c_z:=(2a_1^2+a_2){c_e}^2$; here $1-{\gamma_0}\in[\tfrac12,1)$ since $\lambda_{\max}(P_0)\ge1$ (Lemma~\ref{lem:roa}). Iterating from $\Omega_0=0$, $\Omega_t\le c_z/{\gamma_0}=2\lambda_{\max}(P_0)(2a_1^2+a_2){c_e}^2$, so $\|z_t\|_2^2\le\Omega_t/\lambda_{\min}(P_0)\le\chi^2$.
\end{proof}

\subsection{Probabilistic closed-loop near-optimality}
\label{subsec:nearopt}

\looseness=-1 We now develop a statistical certificate for closed-loop LQR performance over a fixed validation horizon. This result relies on neither the imitation-risk bound in Section~\ref{sec:theory}-A nor the deterministic regional guarantees in Section~\ref{sec:theory}-B. It requires neither a uniform action-error bound nor a verified invariant region. Instead, it combines an exact LQR cost identity with independent calibration and validation rollouts. We first fix one problem instance and suppress its index. Let $\mathcal{W}:=R+B^\top PB\succ0$, and denote the truncated cost over a horizon $\tau\in\mathbb{N}$ by $\mathcal{J}_\tau(x_0;\pi):=\sum_{t=0}^{\tau-1}\bigl(x_t^\top Qx_t+u_t^\top Ru_t\bigr)$, written $\mathcal{J}_\tau(x_0;u)$ when an explicit control sequence $\{u_t\}$ is applied in place of a policy. Since $x_0^\top Px_0=\mathcal{J}(x_0;\pi^\star)\ge x_0^\top Qx_0$ and $Q\succ0$, we have $P\succ0$.

\begin{lemma}[Exact cost decomposition]
\label{lem:cs}
For any control sequence $\{u_t\}$ applied to \eqref{eq:system}, with $e_t:=u_t+K^\star x_t$ and every $\tau\in\mathbb{N}$,
\begin{align}
\mathcal{J}_\tau(x_0;u) \;=\; x_0^\top Px_0 \;-\; x_\tau^\top Px_\tau \;+\; \sum_{t=0}^{\tau-1} e_t^\top \mathcal{W}\,e_t .
\label{eq:cs}
\end{align}
\end{lemma}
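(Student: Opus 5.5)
The plan is a completing-the-square argument applied one time step at a time, followed by a telescoping sum. I will show that for every $t$ and every input $u_t$ (with $x_{t+1}=Ax_t+Bu_t$),
\begin{align}
x_t^\top Qx_t+u_t^\top Ru_t \;=\; x_t^\top Px_t - x_{t+1}^\top Px_{t+1} + e_t^\top\mathcal{W}e_t,
\label{eq:onestep_plan}
\end{align}
where $e_t=u_t+K^\star x_t$. Summing \eqref{eq:onestep_plan} over $t=0,\dots,\tau-1$ makes the $P$-terms telescope to $x_0^\top Px_0-x_\tau^\top Px_\tau$, which is exactly \eqref{eq:cs}. No assumption on the control sequence (stability, boundedness, or the operating region of Assumption~\ref{as:cl}) is needed, since the identity is purely algebraic at each step.

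To prove \eqref{eq:onestep_plan}, I will drop the time index and expand $x_{+}^\top Px_{+}=(Ax+Bu)^\top P(Ax+Bu)$. Adding $x^\top Qx+u^\top Ru$ gives
\begin{align}
x^\top Qx+u^\top Ru+x_{+}^\top Px_{+} \;=\; x^\top(Q+A^\top PA)x+2u^\top B^\top PAx+u^\top\mathcal{W}u. \nonumber
\end{align}
By \eqref{eq:lqr_gain}, $B^\top PA=\mathcal{W}K^\star$. Completing the square in $u$ then yields
\begin{align}
u^\top\mathcal{W}u+2u^\top\mathcal{W}K^\star x \;=\; (u+K^\star x)^\top\mathcal{W}(u+K^\star x)-x^\top K^{\star\top}\mathcal{W}K^\star x, \nonumber
\end{align}
where the last term equals $x^\top A^\top PB\,\mathcal{W}^{-1}B^\top PAx$. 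The ARE \eqref{ARE} states precisely that
\begin{align}
Q+A^\top PA-A^\top PB\mathcal{W}^{-1}B^\top PA \;=\; P, \nonumber
\end{align}
so the $x$-quadratic terms collapse to $x^\top Px$. Rearranging gives \eqref{eq:onestep_plan}.

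The only delicate step is the bookkeeping that identifies $K^{\star\top}\mathcal{W}K^\star$ with the Riccati correction term. This uses the symmetry of $P$ and $\mathcal{W}$ and the invertibility $\mathcal{W}\succ0$, which follows from $R\succ0$ and $P\succeq0$. Everything else is direct expansion and telescoping. I will also note that the identity holds for arbitrary, possibly non-causal, control sequences, which is why it applies verbatim to the learned policy's rollouts.
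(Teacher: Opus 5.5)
Your proposal is correct and follows the same route as the paper: a one-step completion of squares using $B^\top PA=\mathcal{W}K^\star$ and the ARE \eqref{ARE}, then telescoping over $t=0,\dots,\tau-1$. The paper only cites this classical argument, so your write-up supplies the algebra it leaves implicit.
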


\begin{proof}
This is the classical completion of squares for the discrete-time LQR: expanding one step with the ARE \eqref{ARE} and telescoping gives \eqref{eq:cs}; see, e.g.,~\cite{anderson1990optimal}.
\end{proof}

\looseness=-1 Since $\mathcal{J}(x_0;\pi^\star)=x_0^\top Px_0$, identity \eqref{eq:cs} says the terminal-value-completed excess cost of \emph{any} policy is exactly the $\mathcal{W}$-weighted energy of its control deviations. Near-optimality therefore reduces to the error energy, which we certify directly from sampled rollouts.

For a closed-loop rollout of $\pi_{\hat\phi}$ from $x_0\neq0$ over a validation horizon $T_v\in\mathbb{N}$, define the excess statistic
\begin{align}
X(x_0)\;:=\;\frac{\sum_{t=0}^{T_v-1} e_t^\top\mathcal{W}\,e_t}{x_0^\top Px_0}.
\label{eq:excess}
\end{align}
The partial sums of the error energy are nondecreasing in the horizon, so Lemma~\ref{lem:cs} together with $x_\tau^\top Px_\tau\ge0$ gives
$\mathcal{J}_\tau(x_0;\pi_{\hat\phi})\le\bigl(1+X(x_0)\bigr)\,\mathcal{J}(x_0;\pi^\star)$ for every $\tau\le T_v$.
The whole rollout is a single sample; no independence across time is required. This design follows the statistical verification of~\cite{hertneck2018learning} and the scenario approach~\cite{campi2008exact}; the calibration-fixed threshold parallels split conformal prediction~\cite{vovk2005algorithmic}. Let $\nu_{\mathrm{cert}}$ be a distribution of initial conditions. A threshold $\bar{X}$ is fixed on an independent calibration draw, for which we take the largest statistic observed on $m_{\mathrm{cal}}$ calibration rollouts. For $x_0\sim\nu_{\mathrm{cert}}$, define the violation indicator $F(x_0):=1$ if $X(x_0)>\bar{X}$, and $F(x_0):=0$ otherwise.

\begin{lemma}[Clopper--Pearson certificate]
\label{lem:cert}
Draw $x_0^{(1)},\dots,x_0^{(m)}\overset{\mathrm{iid}}{\sim}\nu_{\mathrm{cert}}$ and let $k:=\sum_{l=1}^m F(x_0^{(l)})$ be the number of violating rollouts. For $\zeta_v\in(0,1)$ define the exact one-sided upper limit
\begin{align}
\alpha_m\;:=\;\inf\bigl\{p'\in[0,1]\,:\,\mathbb{P}\bigl(\mathrm{Bin}(m,p')\le k\bigr)\le\zeta_v\bigr\},
\label{eq:cert}
\end{align}
equivalently the $(1-\zeta_v)$-quantile of the $\mathrm{Beta}(k+1,\,m-k)$ distribution~\cite{clopper1934use}; for $k=0$, $\alpha_m=1-\zeta_v^{1/m}$, and $\alpha_m:=1$ for $k=m$. Then, with probability at least $1-\zeta_v$ over the validation draw, $\mathbb{P}_{x_0\sim\nu_{\mathrm{cert}}}\bigl(F(x_0)=1\bigr)\le\alpha_m$.
\end{lemma}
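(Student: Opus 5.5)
The plan is to condition on the calibration draw, which makes $\bar X$ a fixed constant independent of the validation rollouts, and then reduce the claim to the standard validity of the one-sided Clopper--Pearson upper limit. Conditional on the calibration data, the indicators $F(x_0^{(1)}),\dots,F(x_0^{(m)})$ are i.i.d.\ Bernoulli with unknown parameter $p:=\mathbb{P}_{x_0\sim\nu_{\mathrm{cert}}}(F(x_0)=1)$. This uses the independence of the calibration and validation draws and the fact that each rollout, with its deterministic closed loop and statistic $X(x_0)$, is a measurable function of its own $x_0$. Hence $k\sim\mathrm{Bin}(m,p)$. Once the conditional bound $\mathbb{P}(p>\alpha_m\mid\text{calibration})\le\zeta_v$ is established, integrating over the calibration draw gives the unconditional statement.

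For the coverage step, define $G(p',k):=\mathbb{P}(\mathrm{Bin}(m,p')\le k)$. This function is continuous and nonincreasing in $p'$, strictly decreasing for $k<m$, and satisfies $G(0,k)=1$. By this monotonicity, the event $\{\alpha_m<p\}$ is contained in $\{G(p,k)\le\zeta_v\}$: if $\alpha_m<p$, there is a $p'<p$ with $G(p',k)\le\zeta_v$, so $G(p,k)\le G(p',k)\le\zeta_v$. The set $\{j: G(p,j)\le\zeta_v\}$ is a down-set $\{0,\dots,j^\ast\}$, where $j^\ast$ is the largest such index, because $G(p,\cdot)$ is nondecreasing in $j$. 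Therefore
\[
\mathbb{P}(\alpha_m<p)\le\mathbb{P}(k\le j^\ast)=G(p,j^\ast)\le\zeta_v .
\]
If no such $j$ exists, the probability is zero. The case $k=m$ is handled trivially, since $\alpha_m:=1\ge p$.

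Finally, I would verify the stated closed forms. For $k=0$, $G(p',0)=(1-p')^m\le\zeta_v$ gives $\alpha_m=1-\zeta_v^{1/m}$. The Beta-quantile characterization follows from the identity $\mathbb{P}(\mathrm{Bin}(m,p')\le k)=\mathbb{P}(\mathrm{Beta}(k+1,m-k)>p')$, so the infimum equals the $(1-\zeta_v)$-quantile.

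The main obstacle is the down-set and monotonicity bookkeeping, specifically keeping the inequality directions and the infimum-versus-attainment issue straight. Continuity of $G$ in $p'$ ensures that the infimum is attained and satisfies $G(\alpha_m,k)\le\zeta_v$. A second point to make explicit is that the calibration-fixed threshold does not break the Bernoulli model, which is exactly why the conditioning step comes first.
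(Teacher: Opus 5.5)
Your proposal is correct and takes the same route as the paper: the paper's proof just cites the classical one-sided Clopper--Pearson bound ("inverts the binomial tail"), and your monotonicity and down-set argument writes that inversion out in full. Your explicit conditioning on the calibration draw, which fixes $\bar{X}$ and makes the validation indicators i.i.d.\ Bernoulli, is left implicit in the paper and is the right justification for applying the classical bound.
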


\begin{proof}
This is the classical one-sided Clopper--Pearson bound; the proof inverts the binomial tail~\cite{clopper1934use}.
\end{proof}

\begin{theorem}[Certified closed-loop near-optimality]
\label{thm:nearopt}
Fix $\bar{X}$ from the calibration draw, run the validation of Lemma~\ref{lem:cert}, and let $\alpha_m$ be as in \eqref{eq:cert}. Then, with probability at least $1-\zeta_v$ over the validation draw, a fresh initial condition $x_0\sim\nu_{\mathrm{cert}}$ with $x_0\neq0$ satisfies, with probability at least $1-\alpha_m$,
\begin{align}
\mathcal{J}_\tau(x_0;\pi_{\hat\phi})\;\le\;\bigl(1+\bar{X}\bigr)\,\mathcal{J}(x_0;\pi^\star)
\quad\text{for every }\tau\le T_v.
\label{eq:nearopt}
\end{align}
\end{theorem}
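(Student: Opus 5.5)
The proof combines two ingredients. The first is the deterministic implication from Lemma~\ref{lem:cs}: on the event $X(x_0)\le\bar X$ the cost bound \eqref{eq:nearopt} holds for every $\tau\le T_v$. The second is the statistical bound from Lemma~\ref{lem:cert}: that event has probability at least $1-\alpha_m$ under $\nu_{\mathrm{cert}}$, with confidence $1-\zeta_v$ over the validation draw. The only care needed is in ordering the conditioning: calibration first, then validation, then the fresh sample.

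\textbf{Step 1 (deterministic reduction).} Fix any $x_0\neq0$ and consider the closed-loop rollout of $\pi_{\hat\phi}$. Apply Lemma~\ref{lem:cs} with the applied controls $u_t$, so that $e_t=u_t+K^\star x_t$. For $\tau\le T_v$, drop the term $-x_\tau^\top Px_\tau\le0$ (valid since $P\succ0$). Since $\mathcal W\succ0$, each summand $e_t^\top\mathcal W e_t$ is nonnegative, so the partial sum up to $\tau-1$ is bounded by the full sum up to $T_v-1$. This sum equals $X(x_0)\,x_0^\top Px_0$, and $x_0^\top Px_0>0$ because $P\succ0$ and $x_0\ne0$. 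Using $\mathcal J(x_0;\pi^\star)=x_0^\top Px_0$, we obtain $\mathcal J_\tau(x_0;\pi_{\hat\phi})\le(1+X(x_0))\mathcal J(x_0;\pi^\star)$. Hence $F(x_0)=0$, i.e.\ $X(x_0)\le\bar X$, implies \eqref{eq:nearopt} for every $\tau\le T_v$, and so
\[
\mathbb P_{x_0}\bigl(\eqref{eq:nearopt}\text{ fails}\bigr)\le\mathbb P_{x_0}\bigl(F(x_0)=1\bigr).
\]
The fresh sample is taken under $\nu_{\mathrm{cert}}$ conditioned on $x_0\neq0$. If $\nu_{\mathrm{cert}}$ has an atom at the origin, normalize accordingly or assume $\nu_{\mathrm{cert}}(\{0\})=0$. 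In the latter case conditioning changes nothing, and I would state this as the interpretation.

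\textbf{Step 2 (conditioning on calibration).} The threshold $\bar X$ is a function of the calibration rollouts only, and these are independent of the validation draw and of the fresh $x_0$. Condition on the calibration outcome. Then $\bar X$, and hence $F$, is a fixed measurable function of $x_0$. The validation indicators $F(x_0^{(l)})$ are i.i.d.\ Bernoulli with parameter $p:=\mathbb P_{x_0\sim\nu_{\mathrm{cert}}}(F(x_0)=1)$. This holds because each rollout, including its full time dependence, is a deterministic function of its own initial condition: the policy is fixed, so no independence across time is needed.

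\textbf{Step 3 (apply Lemma~\ref{lem:cert}).} Lemma~\ref{lem:cert} gives $p\le\alpha_m$ with conditional probability at least $1-\zeta_v$ over the validation draw. Integrating over the calibration outcome preserves this bound unconditionally, since it holds for every calibration value. On that event, combining with Step~1 shows that a fresh independent $x_0$ satisfies \eqref{eq:nearopt} with probability at least $1-p\ge1-\alpha_m$. The edge case $k=m$, where $\alpha_m=1$, holds trivially.

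I expect the main obstacle to be Step~2. The measurability and independence must be set up so that $\bar X$ is genuinely frozen before validation; otherwise reusing data would break the binomial model. The fix is to make the nested probability statement explicit: conditional on calibration, then over validation, then over the fresh $x_0$.
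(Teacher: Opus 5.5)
Your proposal is correct and follows the same route as the paper's proof. Lemma~\ref{lem:cert} bounds the probability of $X(x_0)>\bar X$. On the complementary event, Lemma~\ref{lem:cs}, with the terminal term $x_\tau^\top P x_\tau\ge0$ dropped and the partial sums of $e_t^\top\mathcal W e_t$ extended to $T_v$, gives the cost bound for every $\tau\le T_v$. Your explicit conditioning on the calibration draw, and your remarks on the i.i.d.\ Bernoulli structure and the $x_0\neq0$ edge case, spell out more carefully what the paper leaves implicit in ``Fix $\bar X$ from the calibration draw.''
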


\begin{proof}
By Lemma~\ref{lem:cert}, with probability at least $1-\zeta_v$ over the validation draw, a fresh rollout satisfies $X(x_0)\le\bar{X}$ except with probability at most $\alpha_m$; condition on that event. For every $\tau\le T_v$, Lemma~\ref{lem:cs} with $x_\tau^\top Px_\tau\ge0$ gives
$\mathcal{J}_\tau(x_0;\pi_{\hat\phi})-x_0^\top Px_0\le\sum_{t=0}^{\tau-1}e_t^\top\mathcal{W}e_t\le\sum_{t=0}^{T_v-1}e_t^\top\mathcal{W}e_t=X(x_0)\,x_0^\top Px_0\le\bar{X}\,x_0^\top Px_0$,
and $x_0^\top Px_0=\mathcal{J}(x_0;\pi^\star)$.
\end{proof}

\begin{corollary}[Learned-then-optimal policies]
\label{cor:hybrid}
In the setting of Theorem~\ref{thm:nearopt}, with the same probabilities, for every $\tau\le T_v$ the hybrid policy that applies $\pi_{\hat\phi}$ up to time $\tau$ and $\pi^\star$ thereafter has infinite-horizon cost at most $\bigl(1+\bar{X}\bigr)\,\mathcal{J}(x_0;\pi^\star)$; this cost equals $\mathcal{J}_\tau(x_0;\pi_{\hat\phi})+x_\tau^\top Px_\tau$.
\end{corollary}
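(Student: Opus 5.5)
The corollary follows by splicing the finite-horizon identity of Lemma~\ref{lem:cs} with the known optimal cost-to-go, then reusing the bound from Theorem~\ref{thm:nearopt}. Fix $\tau\le T_v$ and a realization in which the events of Theorem~\ref{thm:nearopt} hold. Define the hybrid input sequence as follows: $u_t$ is the learned control for $t<\tau$, and $u_t=-K^\star x_t$ for $t\ge\tau$. The first $\tau$ states of the hybrid trajectory coincide with those of the learned closed loop, since the policies agree on that interval and the dynamics \eqref{eq:system} are causal. In particular, $x_\tau$ is the same state in both.

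\textbf{Step 1: split the hybrid cost.} Write the infinite-horizon hybrid cost as $\mathcal{J}_\tau(x_0;\pi_{\hat\phi})$ plus the tail $\sum_{t\ge\tau}(x_t^\top Qx_t+u_t^\top Ru_t)$. The tail is the cost of $\pi^\star$ started from $x_\tau$, by time invariance. Hence it equals $\mathcal{J}(x_\tau;\pi^\star)=x_\tau^\top Px_\tau$, using the optimal-cost formula of Section~\ref{sec:problem_formulation}. This gives the stated equality.

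Alternatively, apply Lemma~\ref{lem:cs} to the hybrid sequence with horizon $\tau'\to\infty$. Then $e_t=0$ for $t\ge\tau$, and $x_{\tau'}^\top Px_{\tau'}\to0$ because $A_{\mathrm{cl}}$ is Schur. This yields the same identity. The only point needing care is convergence of the tail series, which follows from Schur stability.

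\textbf{Step 2: bound the hybrid cost.} Lemma~\ref{lem:cs} at horizon $\tau$ gives
\[
\mathcal{J}_\tau(x_0;\pi_{\hat\phi})+x_\tau^\top Px_\tau=x_0^\top Px_0+\sum_{t=0}^{\tau-1}e_t^\top\mathcal{W}e_t .
\]
Here the terminal term is retained, so no inequality is needed. The partial sums are nondecreasing in the horizon, so the right-hand side is at most $x_0^\top Px_0+X(x_0)\,x_0^\top Px_0$.

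On the event of Theorem~\ref{thm:nearopt}, which holds with the same probabilities $1-\zeta_v$ over the validation draw and $1-\alpha_m$ over the fresh $x_0$, we have $X(x_0)\le\bar X$. Therefore the hybrid cost is at most $(1+\bar X)\,\mathcal{J}(x_0;\pi^\star)$. This event does not depend on $\tau$, so the bound holds simultaneously for every $\tau\le T_v$.

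No step is a genuine obstacle. The only subtlety is justifying the infinite tail in Step 1, which is handled by the closed-form cost-to-go or by Schur stability.
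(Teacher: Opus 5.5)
Your proposal is correct and matches the paper's proof: identify the post-switch tail as the optimal cost-to-go $x_\tau^\top Px_\tau$, then apply \eqref{eq:cs} at horizon $\tau$ with the terminal term kept, and bound the partial error energy by $X(x_0)\,x_0^\top Px_0\le\bar{X}\,x_0^\top Px_0$ on the certified event, which does not depend on $\tau$. Your alternative limit argument, using Schur stability of $A_{\mathrm{cl}}$, is a valid extra justification of the same equality but is not needed.
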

\begin{proof}
The hybrid cost equals $\mathcal{J}_\tau(x_0;\pi_{\hat\phi})+x_\tau^\top Px_\tau$ because $x_\tau^\top Px_\tau$ is the optimal cost-to-go; by \eqref{eq:cs} this is $x_0^\top Px_0+\sum_{t<\tau}e_t^\top\mathcal{W}e_t\le(1+\bar{X})\,x_0^\top Px_0$ on the certified event.
\end{proof}

\looseness=-1 For $\pi_{\hat\phi}$ itself the tail beyond $T_v$ is not certified; Corollary~\ref{cor:hybrid} closes the horizon by the switch to $\pi^\star$, whose remaining optimal cost-to-go is the terminal term $x_\tau^\top Px_\tau$ of \eqref{eq:cs}. Implementing this switch requires the model-based gain $K^\star$, hence plant knowledge at deployment. On the validation draws of Section~\ref{sec:results}, the median remaining optimal cost-to-go at $T_v$ is below $10^{-2}$ of $x_0^\top Px_0$ on $24$ of the $28$ systems; the four slowest systems retain more. Nothing in Lemma~\ref{lem:cert} or Theorem~\ref{thm:nearopt} uses the structure of $\nu_{\mathrm{cert}}$: when each draw is a tuple $(A,B,Q,R,x_0)$ from a joint certification distribution $\mathcal{D}_{\mathrm{cert}}$, with $X$ evaluated through the sampled instance's own $(P,K^\star,\mathcal{W})$, every rollout is still one i.i.d.\ Bernoulli draw and the certificate holds verbatim.

\section{Results}
\label{sec:results}

\looseness=-1 We instantiate Theorem~\ref{thm:nearopt} separately on each of the $28$ benchmark systems from~\cite{binmohaya2026transformers}, yielding $28$ per-system family certificates. Each rollout draws an independent tuple $(A,B,Q,R,x_0)\sim\mathcal{D}_{\mathrm{cert}}$. Before discretization, each physical parameter is perturbed according to $p_j=p_{j,\mathrm{nom}}(1+\delta_j)$, where $\delta_j\overset{\mathrm{iid}}{\sim}\mathrm{Unif}[-0.1,0.1]$. The diagonal cost pair is drawn uniformly from the system's nine training cost pairs, and $x_0$ is drawn from its training initial-condition distribution. Thus, the certified violation probability is joint over the plant, cost, and initial condition. For each sampled instance, $(P,K^\star,\mathcal{W})$ are computed from the discrete-time model realized by the simulator. The initial-condition distribution is the same as that used in training and fine-tuning, except that the Segway uses half of the default family region. This choice was fixed before calibration and validation.

For each system, the threshold $\bar{X}$ is set to the largest excess statistic observed among $m_{\mathrm{cal}}=200$ independent calibration rollouts. We then draw $m=1000$ independent validation instances, simulate each over $T_v=500$ steps, and count the violations of $X>\bar{X}$. At confidence level $1-\zeta_v=0.95$, the exact binomial bound gives $\alpha_m=0.003$ when no violation is observed. Across all systems, the per-system bounds satisfy $\alpha_m\le0.031$. Applying a Bonferroni correction with $\zeta_v/28$ provides $95\%$ simultaneous confidence for all $28$ certificates, with every corrected violation-probability bound below $0.038$.

\autoref{tab:cert} reports the per-system certificates. Nine systems certify a relative-excess threshold below $1\%$, twenty certify a threshold below $10\%$, and twenty-six have thresholds below one. The tightest threshold is $4.8\times10^{-6}$ for the suspension system. The three most challenging unseen systems use additional system-specific fine-tuning. This was completed before calibration and validation. Although these systems have the largest thresholds in \autoref{tab:cert}, their certified violation-probability bounds remain small.

\looseness=-1 \autoref{fig:invariant} illustrates a statistical containment certificate for the nominal instances of four two-state benchmark systems, three seen and one unseen. For each system, the shaded region is $\mathcal O_{\eta_0}$, where $\eta_0:=\max_{j}V_0(x_0^{(j)})$ is computed from $200$ independent calibration draws and fixed before validation. A failure occurs if $x_t\notin\mathcal O_{\eta_0}$ for some $t\le T_v$. An initial state outside the region is therefore a failure at $t=0$. Among $1000$ independent validation rollouts per system, every trajectory starting inside its calibrated region remains inside throughout the validation horizon. The observed failures, at most five per system, all result from initial states outside the region. By Lemma~\ref{lem:cert}, the corresponding containment-failure probability is at most $0.011$ with $95\%$ confidence for each system.

\looseness=-1 In each panel, $x_1$ and $x_2$ denote angle and angular velocity for the pendulum, position and velocity for the double integrator and damped oscillator, and prey and predator deviations from equilibrium for the linearized Lotka--Volterra model. The displayed learned trajectories remain close to their optimal counterparts and converge toward the origin within the calibrated region. This is a statistical containment certificate. This is distinct from the deterministic guarantee of Lemma~\ref{lem:roa}, whose constants are not evaluated here.

\begin{figure}[t]
\centering
\includegraphics[width=0.88\linewidth]{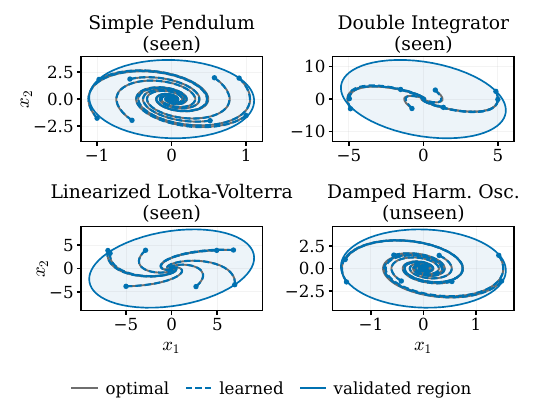}
\caption{Validated containment regions ($V_0(x)\le\eta_0$); eight of $1000$ validation rollouts per system; dots mark initial states.}
\label{fig:invariant}
\end{figure}

\begin{table}[t]
\scriptsize
\renewcommand{\arraystretch}{0.85}
\setlength{\tabcolsep}{3.5pt}
\centering
\caption{Family-Level Near-Optimality Certificates (Theorem~\ref{thm:nearopt})}
\label{tab:cert}
\begin{tabular}{@{}lrrrrr@{}}
\toprule
System & $\bar{X}$ & $\hat p$ & $\alpha_m$ & $X$ med & $X$ max\\
\midrule
1.\ Simple Pendulum (s) & $2.0{\times}10^{-2}$ & 0.002 & 0.006 & $6.9{\times}10^{-4}$ & $5.6{\times}10^{-2}$\\
2.\ Two Link Arm (s) & $3.6{\times}10^{-3}$ & 0.012 & 0.019 & $1.1{\times}10^{-3}$ & $4.5{\times}10^{-3}$\\
3.\ Spring Damper (s) & $3.3{\times}10^{-3}$ & 0.020 & 0.029 & $4.1{\times}10^{-4}$ & $1.0{\times}10^{-2}$\\
4.\ Suspension (s) & $4.8{\times}10^{-6}$ & 0.016 & 0.024 & $1.1{\times}10^{-6}$ & $5.4{\times}10^{-6}$\\
5.\ DC Motor (s) & $5.3{\times}10^{-2}$ & 0.001 & 0.005 & $4.2{\times}10^{-5}$ & $7.1{\times}10^{-2}$\\
6.\ Three Link Man.\ (s) & $4.1{\times}10^{-3}$ & 0.008 & 0.014 & $1.1{\times}10^{-3}$ & $6.6{\times}10^{-3}$\\
7.\ Diff.\ Drive (s) & $1.5{\times}10^{-2}$ & 0.002 & 0.006 & $4.2{\times}10^{-4}$ & $1.8{\times}10^{-2}$\\
8.\ SCARA (s) & $4.26{\times}10^{-2}$ & 0.001 & 0.005 & $2.0{\times}10^{-3}$ & $4.31{\times}10^{-2}$\\
9.\ Omnidirectional (s) & $6.2{\times}10^{-3}$ & 0.005 & 0.010 & $4.4{\times}10^{-4}$ & $7.7{\times}10^{-3}$\\
10.\ Cable Driven (s) & $4.7{\times}10^{-3}$ & 0.011 & 0.018 & $4.3{\times}10^{-4}$ & $6.7{\times}10^{-3}$\\
11.\ Flexible Joint (s) & $2.4{\times}10^{-1}$ & 0.003 & 0.008 & $1.3{\times}10^{-1}$ & $2.6{\times}10^{-1}$\\
12.\ Six DOF Man.\ (s) & $1.5{\times}10^{-3}$ & 0.008 & 0.014 & $4.0{\times}10^{-4}$ & $2.3{\times}10^{-3}$\\
13.\ Dual Arm (s) & $1.6{\times}10^{-2}$ & 0.003 & 0.008 & $1.4{\times}10^{-3}$ & $2.9{\times}10^{-2}$\\
14.\ Double Integr.\ (s) & $3.1{\times}10^{-2}$ & 0.002 & 0.006 & $6.2{\times}10^{-4}$ & $3.3{\times}10^{-2}$\\
15.\ Lotka Volterra (s) & $3.3{\times}10^{-1}$ & 0.000 & 0.003 & $7.6{\times}10^{-4}$ & $2.4{\times}10^{-1}$\\
16.\ Inverted Pend.\ (u) & $1.1{\times}10^{1}$ & 0.006 & 0.012 & $4.4{\times}10^{-2}$ & $6.2{\times}10^{1}$\\
17.\ Segway (u) & $6.0{\times}10^{0}$ & 0.001 & 0.005 & $1.3{\times}10^{-1}$ & $1.5{\times}10^{1}$\\
18.\ Asym.\ Oscillator (u) & $8.7{\times}10^{-1}$ & 0.000 & 0.003 & $2.4{\times}10^{-2}$ & $7.1{\times}10^{-1}$\\
19.\ Active Mass D.\ (u) & $1.1{\times}10^{-4}$ & 0.001 & 0.005 & $1.4{\times}10^{-5}$ & $1.2{\times}10^{-4}$\\
20.\ Coupled Osc.\ (u) & $1.1{\times}10^{-2}$ & 0.004 & 0.009 & $1.1{\times}10^{-3}$ & $1.4{\times}10^{-2}$\\
21.\ Damped Osc.\ (u) & $1.5{\times}10^{-2}$ & 0.003 & 0.008 & $1.3{\times}10^{-3}$ & $7.4{\times}10^{-2}$\\
22.\ Triple Mass Spr.\ (u) & $2.0{\times}10^{-2}$ & 0.021 & 0.030 & $3.3{\times}10^{-3}$ & $5.0{\times}10^{-2}$\\
23.\ Electromech.\ Act.\ (u) & $4.8{\times}10^{-1}$ & 0.001 & 0.005 & $3.0{\times}10^{-3}$ & $6.6{\times}10^{-1}$\\
24.\ Thermal (u) & $1.2{\times}10^{-3}$ & 0.011 & 0.018 & $7.5{\times}10^{-5}$ & $5.4{\times}10^{-3}$\\
25.\ Fluid Tank (u) & $9.6{\times}10^{-2}$ & 0.010 & 0.017 & $2.3{\times}10^{-3}$ & $1.3{\times}10^{0}$\\
26.\ Vibrating Beam (u) & $2.1{\times}10^{-2}$ & 0.018 & 0.027 & $9.7{\times}10^{-3}$ & $5.7{\times}10^{-2}$\\
27.\ Motor Generator (u) & $1.6{\times}10^{-1}$ & 0.000 & 0.003 & $8.5{\times}10^{-4}$ & $1.2{\times}10^{-1}$\\
28.\ Mech.\ Linkage (u) & $1.4{\times}10^{-1}$ & 0.006 & 0.012 & $1.4{\times}10^{-2}$ & $2.2{\times}10^{-1}$\\
\bottomrule
\end{tabular}
\par
\begin{minipage}{\linewidth}\scriptsize
$\bar{X}$: calibration-fixed relative excess threshold; $\hat p=k/m$: observed violation rate; $\alpha_m$: confidence upper bound on the violation probability; $X$ med/max: median and maximum of \eqref{eq:excess} on validation; (s) seen, base policy; (u) unseen, fine-tuned copy.
\end{minipage}
\end{table}

\section{Conclusions}
\label{sec:conclusions}

\looseness=-1 This letter develops three guarantees for the transformer policy introduced in~\cite{binmohaya2026transformers}. First, a finite-sample bound controls the population training loss. Second, regional closed-loop results establish a forward-invariant operating region and a worst-case bound on deviation from the optimal rollout. Third, and most importantly, an exact LQR identity expresses the closed-loop excess cost as a per-rollout statistic, while an exact binomial bound yields a high-confidence near-optimality certificate over the validation horizon. Across $28$ benchmark systems, twenty certify suboptimality below $10\%$, and all per-system certificates have violation probabilities below $3.1\%$, each at $95\%$ confidence. These guarantees apply to the validated distributions over plants, costs, and initial conditions. Future work will address infinite-horizon guarantees, broader certification distributions, pooled certificates across heterogeneous systems, and noisy, model-mismatched, partially observed, or non-diagonally weighted settings.

\bibliographystyle{IEEEtran}
\bibliography{lcsys}

\end{document}